  \newcommand\omicron{O}
\begin{document}
\title{Supersingular Isogeny Oblivious Transfer (SIOT)}
%
%\titlerunning{Abbreviated paper title}
% If the paper title is too long for the running head, you can set
% an abbreviated paper title here
%
\author{Paulo Barreto\inst{1} \and
Anderson Nascimento\inst{1}\and
Gl\'{a}ucio Oliveira\inst{2}\and Waldyr Benits\inst{3}}
\authorrunning{Barreto, P., Nascimento, A., et al.}
% First names are abbreviated in the running head.
% If there are more than two authors, 'et al.' is used.
%
\institute{University of Washington, Tacoma, USA \\
\email{\{pbarreto,andclay\}@uw.edu} \\
\url{http://directory.tacoma.uw.edu/employee/pbarreto}, \url{http://directory.tacoma.uw.edu/employee/andclay} \and
Institute of Mathematics and Statistics, University of S\~{a}o Paulo, Brazil
\email{glaucioaorj@gmail.com} \and
Naval Systems Analysis Center, Brazilian Navy, Brazil\\
\email{wbenits@yahoo.com.br}}
\maketitle              % typeset the header of the contribution
\begin{abstract}
In this paper we present an \emph{Oblivious Transfer} (OT) protocol that combines an OT
  scheme together with the
  \emph{Supersingular Isogeny Diffie-Hellman} (SIDH) primitive. Our proposal is a candidate for
  \emph{post-quantum} secure OT and demonstrates that SIDH naturally supports
  OT functionality. We consider the protocol in the simplest
  configuration of $\binom{2}{1}$-SIOT and analyze the 
  protocol to verify its security. 

\keywords{supersingular elliptic curves, isogenies, \emph{supersingular isogeny Diffie-Hellman}, \emph{oblivious transfer}.}
\end{abstract}
\section{Introduction}

The first notion of Oblivious Transfer (OT) was proposed in~\cite{rabin:1981}. Most, if not all, of cryptography  protocols can be based on the notion of OT, under the assumption that an efficient
OT scheme is available.

Efficient OT protocols are known in a
\textit{quantum}-susceptible scenario~\cite{cho:orl}, where the
underlying security assumption is the hardness of computing discrete
logarithms or factoring integers. 

Additionally, many papers have
introduced OT in the context of \textit{quantum}
cryptography~\cite{ben:bra:cre,cre:kil,dam:sal,may:sal,may,unr,yao},
where the legitimate users manipulate quantum states. The \emph{post-quantum} OT
research has gradually increased over time. 
Thus, some examples could be cited, such as the work done by~\cite{kazmi} and~\cite{vanessa:2019}. 

In general, in an $\binom{2}{1}$-OT protocol, a
sender sends two messages, say $m_{a}$ and $m_{b}$, and the receiver chooses only one of
them (for example, the receiver chooses $m_{a}$). At the end of the protocol, the sender does not know which of
the messages was chosen, and also the receiver learns nothing about
the other message (in this case, $m_b$). 

\textbf{Our contribuition}. According to~\cite{yl}, OT is one of the most important structures in cryptography for the construction of secure protocols. In terms of application, these types of protocols can be used in electronic auctions processes or even in contract signatures~\cite{gold} or electronic money transaction schemes~\cite{barak:2007}. In this work, our main objective was the implementation of $\binom{2}{1}$-SIOT protocol using the SIDH primitives from~\cite{feo:jao} for the purpose of providing privacy between sender and receiver, at the same time, providing a resistance against the imminent advent of the \emph{quantum} computation.

This paper is organized as follows: 
Section~\ref{sec:siot} describes our \emph{Supersingular Isogeny Oblivious Transfer} protocol. In section~\ref{sec:analyze}, We discuss about security aspects from~$\binom{2}{1}$- SIOT. In section~\ref{conclusion:sec}, we present a conclusion of the security analysis of the proposed protocol. An implementation of the proposed protocol is presented in section~\ref{implementation:siot}. In section~\ref{perform:ot}, we present  a performance estimate between some OT protocols. Finally, we conclude this work in section ~\ref{sec:conclusion}. In addition, we use appendix~\ref{apend:iso} to introduce some crucial background about isogenies of elliptic curves. A simplified presentation of the \emph{Supersingular Isogeny Diffie-Hellman} (SIDH) of~\cite{feo:jao} is shown in appendix~\ref{apendice:sidh}. Furthermore, in appendix~\ref{ot:protocol}, we present a brief concept of OT protocol and a simplified form of the protocol present in~\cite{cho:orl}. In appendix~\ref{linear:points}, we show some definitions about the process that determines linearly independent points used by our proposed protocol. In appendix~\ref{siot:sym}, there is the possibility of applying a symmetric pairing in the security analysis of the SIOT protocol. At last, in appendix~\ref{uv} we will verify that the proposed protocol is able to share certain points that allow to execute the OT functionality. 

\section{The $\binom{2}{1}$ - SIOT protocol}\label{sec:siot} 

In this section, we will see a new scheme called \emph{Supersingular Isogeny Oblivious Transfer} (SIOT) protocol. It is fundamentally inspired on schemes described in  \cite{cho:orl} and~\cite{feo:jao}.~For readers unfamiliar with issues of isogenies between elliptic curves and OT protocol, we suggest an initial reading in appendices~\ref{apend:iso},~~\ref{apendice:sidh} and~~\ref{ot:protocol}.

\subsection{Notations}\label{not2}

We use the cryptographic primitives of the \emph{Supersingular Isogeny Diffie-Hellman} key exchange protocol (SIDH) from \cite{feo:jao}. In this way, the following notations below will be used.

\begin{description}

\item[i.] $\mathcal{M}$, $\mathcal{K}$, $\mathcal{C}$ $\rightarrow$ Set of all plaintexts, keys and ciphertexts with binary strings of fixed length, respectively; 
\item[ii.] $p \rightarrow$ A prime such that  $p = 3 \bmod 4$;
\item[iii.] $\mathbb{F}_{p^{2}} \rightarrow$ A quadratic extension of $\mathbb{F}_{p}$, where $\mathbb{F}_{p^{2}}$ = $\mathbb{F}_{p}[i] / \langle i^{2} + 1\rangle$;
\item[iv.]$E_{0}[\mathbb{F}_{p^{2}}] \rightarrow$ A supersingular elliptic curve  over $\mathbb{F}_{p^{2}}$;
\item[v.]$ \mathbb{Z}/\ell \mathbb{Z} \rightarrow$ A field of integers modulo $\ell$, where $\ell$ is prime and $\ell$ $\nmid$ $p$;
\item[vi.]$P_{A}, Q_{A}  \rightarrow$ Linearly independent points over the supersingular elliptic curve  $E_{0}[\ell_A^{e_A}]$;
\item[vii.] $P_{B}, Q_{B}  \rightarrow$ Linearly independent points over the supersingular elliptic curve  $E_{0}[\ell_B^{e_B}]$;
\item[viii.] $\phi_{A}, \phi_{B} \rightarrow$ Isogenies  between $E_{0}$ and $E_{A}$,  $E_{0}$  and $E_{B}$, respectively;
\item[ix.] $\phi^{'}_{A}, \phi^{'}_{B} \rightarrow$ Isogenies  between $E_{B}$ and $E_{BA}$,  $E_{A}$  and $E_{AB}$, respectively;
\item[x.] $G_{A}, H_{A} \rightarrow$ Images of $P_{B}$ and $Q_{B}$ under Sender's private isogeny $\phi_{A}$; 
\item[xi.] $G_{B}, H_{B} \rightarrow$ Images of $P_{A}$ and $Q_{A}$ under Receiver's private isogeny $\phi_{B}$;
\item[xii.] $\jmath(E_{AB}), \jmath(E_{BA})  \rightarrow$  $\jmath-invariants$ of  supersingular elliptic curve $E_{AB}$ and $E_{BA}, respectively$;
\item[xiii] $e_A$, $e_B$ $\rightarrow$ Positive integers;
\item[xiv.] $r_{A}, r_{B}  \rightarrow$   Integers  from  $\mathbb{Z}/\ell^{e_A}_{A} \mathbb{Z}$  and  $\mathbb{Z}/\ell^{e_B}_{B} \mathbb{Z}$ , respectively; 
\item[xv.] $f$ $\rightarrow$ Small cofactor to ensure a prime $p = \ell^{e_{A}}_{A} \ell^{e_{B}}_{B} f \pm 1$;
\item[xvi.] $\mathcal{H} \rightarrow$ Hash function such that $\mathcal{H} = \{H_{k} : k \in K\}$. It is a hash function family indexed by a finite set $K$,  where each $H_{k}$ is a function from $\mathbb{F}_{p^{2}}$.
\item[xvii.]$\mathcal{E}$, $\mathcal{D}$ $\rightarrow$ Encryption and Decryption algorithms, respectively; 
\item[xviii.]$\pk_A$, $\pk_B$ $\rightarrow$ Sender's public key and Receiver's public key, respectively;
\item[xiv.] $\sk_A$, $\sk_B$ $\rightarrow$ Sender's private key and Receiver's private key, respectively.
\item[xx.] $\mathcal{\omicron}$ $\rightarrow$ Special point located at infinity. It acts as a neutral element in the operation of adding points of an elliptic curve over a finite field;
\item[xxi.] $ker(\phi)$ $\rightarrow$ Kernel of an isogeny. In particular, it is a finite subgroup of an elliptic curve over closed field $\mathbb{F}_{p^{2}}$. It can also be denoted by $\langle P, Q \rangle$, where $P , Q \in E[\mathbb{\bar{F}}_{p^{2}}]$.

\end{description}

\subsection{Public parameters} \label{parameters}

Let $E_0$ be a supersingular elliptic curve defined over $\mathbb{F}_{p^2}$. For convenience, assume a prime $p$ of form\footnote{See \cite{jao:reza} that reports a deep research about  the choice of SIDH-Friendly Primes.} $p = \ell_A^{e_A} \ell_B^{e_B} - 1$ with $\ell_A = 2$ and $e_A \geqslant 4$ (and $f = 1$), or $p = 4\ell_A^{e_A}\ell_B^{e_B} - 1$, where both $\ell_A$ and $\ell_B$ are odd primes (and $f = 4$). Hence, either of these choices yield $p = 3 \pmod 4$, enabling the representation\footnote{See \cite{hoff:pip:silv} for more details about this type of representation.} $\mathbb{F}_{p^2} = \mathbb{F}_p[i] / \langle i^2 + 1 \rangle$ and ensuring that the curve $E_0 [ \mathbb{F}_{p^2}]: y^2 = x^3 + x$ is supersingular, with group order $(\ell_A^{e_A}\ell_B^{e_B}f)^2$. Additionally, let $P_A, Q_A, P_B, Q_B$ points. Then, $E_0[\ell_A^{e_A}]$ and $E_0[\ell_B^{e_B}]$ are generated by kernel  $\langle P_A, Q_A \rangle$ and $\langle P_B, Q_B \rangle$, respectively. The appendix~\ref{linear:points} presents some definitions used to compute such linearly dependent points.

\subsection{Premises}\label{def:premissas} 

\begin{enumerate}

\item Let ($\mathcal{E}, \mathcal{D}$) be a symmetric encryption scheme according to definitions 1 and 2 of~\cite{cho:orl}. The shared symmetric key is the value of the invariant $\jmath$ between two supersingular and isomorphic\footnote{See~\cite{silv}, Proposition III.1.4(b).} elliptic curves. We will see in figure~\ref{fig:SIOT}, that the invariant $\jmath$ will be submitted to hash function $\mathcal{H} = \{ H_{k} : k \in \mathcal{K} \}$ indexed by a finite set $\mathcal{K}$, where each $H_{k}$ is a function of $\mathbb{F}_{p^{2}}$;

\item Alice wants to encrypt two messages $m_{0}$, $m_{1}$ $\in \mathcal{M}$ and send them to Bob. In turn, Bob will decrypt only one of these two messages and Alice will not be aware of his choice; 

\item Alice and Bob use a \emph{coin-flipping} protocol from~\cite{wagner:2016} to share a single uniform random string of $w$ bits. This ensures that neither Alice nor Bob can guess in advance or control the value of $w$. Thus, they must use, for instance, a hash function in this bit string to get the linearly independent points $U$ and $V$. Otherwise, they must generate a new string $w$. 

\end{enumerate}

\subsection{Protocol}
 
 Figure~\ref{fig:SIOT} shows an abstraction of the proposed protocol operation in its simplest form, i.e, $\binom{2}{1}$ - SIOT. 
 
 \begin{figure}[h]
 \centering
  \resizebox{0.99\textwidth}{!}{%
 
 %\begin{minipage}[c]{\textwidth}
 \fbox{\pseudocode{%
  \> \textbf{Supersingular Isogeny Oblivious Transfer (SIOT)}\> \\ [0.1\baselineskip][\hline] 
 \textbf{Sender} \> \> \textbf{Receiver} \\
  \text{Input: } m_0, m_1 \in \mathcal{M} \>\> \text{Input: } \sigma \in \set{0, 1} \\
  \text{Output: None} \>\> \text{Output: } m_\sigma\\
  r_A \leftarrow \ZZ/\ell_A^{e_A}\ZZ  \>\> r_B \leftarrow \ZZ/\ell_B^{e_B}\ZZ\\
   \text{$\phi_A: E_{0} \rightarrow E_{A}$}\>  \> \text{$\phi_B: E_{0} \rightarrow E_{B}$} \\ 
  \text{$E_{A} \leftarrow E_{0} / \langle P_A + r_{A}Q_{A}\rangle$}\>  \> \text{$E_{B} \leftarrow E_{0} / \langle P_B + r_{B}Q_{B}\rangle$} \\
    G_A \leftarrow \phi_A(P_B); H_A \leftarrow \phi_A(Q_B) \>\> G_B \leftarrow \phi_B(P_A); H_B \leftarrow \phi_B(Q_A)  \\
  \pk_A \leftarrow (E_A, G_A, H_A) \>\> \pk_B \leftarrow (E_B, G_B, H_B)  \\
  \> \sendmessageright{centercol=3,top=$\pk_A$} \> \text{If } G_A, H_A \notin E_A[\ell_B^{e_B}], \\
  \> \> \quad \text{then abort ($\perp$).} \\
   \>\>   U, V \sample  E_B[\ell_A^{e_A}] \; \text{then},\\
  \>\> \hat{G}_B \leftarrow (G_B - \sigma U)\\
  \>\> \hat{H}_B \leftarrow (H_B - \sigma V)\\
   \> \> \hat{\pk}_B \leftarrow (E_B, \hat{G}_B, \hat{H}_B) \\
  \text{If } \hat{G}_B, \hat{H}_B \notin E_B[\ell_A^{e_A}], \> \sendmessageleft{centercol=3,top=$\hat{\pk}_B$} \>  \\
  \quad \text{then abort ($\perp$).} \>\> \\ 
  \forall i \in \set{0, 1} \text{and}\;   U, V \sample  E_B[\ell_A^{e_A}] \; \text{then}, \>\> \text{$\phi^{'}_{B}: E_{A} \rightarrow E_{AB}$} \\ 
   \text{$\phi^{'}_{A_{i}}: E_{B} \rightarrow E_{BA_{i}}$}\>\>\\
  E_{BA_i} \leftarrow \langle (\hat{G}_B + iU) + r_A(\hat{H}_B + iV) \rangle \>\> \text{$E_{AB} \leftarrow E_{A}/ \langle G_{A} + r_{B}H_{A}\rangle$} \\ 
  k_{0} = \mathcal{H} ( \jmath(E_{BA_{0}})) \>\>  k_{\sigma} = \mathcal{H}(\jmath(E_{AB})) \\ 
  k_{1} = \mathcal{H} (\jmath(E_{BA_{1}})) \>\>  \\
  \>\sendmessageright{centercol=3,top=$c_0 \gets \mathcal{E}_{k_0}(m_0)$, bottom=$c_1 \gets \mathcal{E}_{k_1}(m_1)$} \>  \\
  \< \<   \\
  \>\> m_\sigma \gets \mathcal{D}_{k_{\sigma}}(c_\sigma);\\
   \< \<  \\
   \< \<   \\ 
  \<\text{\textbf{}}\<
 \>\> } %pseudocode

 } %fbox

  %\end{minipage}
  }
  
 \caption{$\binom{2}{1}$ - SIOT protocol.}
 \label{fig:SIOT}
  
 \end{figure} 

\subsubsection{Generation of key pairs}

\paragraph*{Setup - Sender}

\begin{enumerate}
\item Alice secretly chooses a value  $r_{A} \leftarrow Z / \ell_A^{e_A}Z $;

\item She computes:

\begin{enumerate}

\item $ker(\phi_{A}) =  \langle P_{A} + r_{A}Q_{A} \rangle$;

\item $\phi_{A} : E_{0} \rightarrow E_{A}$;

\item $\phi_{A}(P_{B}) = G_{A}$; $\phi_{A}(Q_{B}) = H_{A}$.

\end{enumerate}

\item Alice creates a pair of keys $sk_{A} = (\phi_{A} , r_{A})$ and $pk_{A} = (E_{A}, G_{A}, H_{A})$, i.e, her private and public keys, respectively;

\item Alice sends to Bob $pk_{A} = (E_{A}, G_{A}, H_{A} )$. He checks if  $G_{A}, H_{A} \in E_{A}[ \ell_B^{e_{B}}]$, i.e, $\ell^{e_{B}}_{B} G_{A} = \ell^{e_{B}}_{B} H_{A} = \mathcal{\omicron}_{A} \in E_{A}$, since $\{P_{B}, Q_{B}\} \subset E_{0}[\ell^{e_{B}}_{B}]$. If this check is valid then, Bob will accept the public key of Alice. Otherwise, that public key will be rejected ($\perp$).

\end{enumerate}

It is denoted $E_A = E_0 / \langle P_A + r_A Q_A \rangle$, where $\vert ker(\phi_A) \vert = \vert \langle P_A + r_A Q_A \rangle \vert =   \ell_A^{e_A}$, $i.e$, a separable\footnote{See~\cite{law}, Proposition 12.8} isogeny of degree $\ell^{e_{A}}_{A}.$ 

\paragraph*{Setup - Receiver}

\begin{enumerate}
\item Bob secretly chooses a value  $r_{B} \leftarrow Z / \ell_B^{e_B}Z $; 

\item He computes:

\begin{enumerate}

\item $ker(\phi_{B}) =  \langle P_{B} + r_{B}Q_{B} \rangle$;

\item $\phi_{B} : E_{0} \rightarrow E_{B}$;

\item $\phi_{B}(P_{A}) = G_{B}$; $\phi_{B}(Q_{A}) = H_{B}$; 

\item For a unique $\sigma \in \bin$, $\hat{G}_{B} = (G_{B} - \sigma U)$ and $\hat{H}_{B} = (H_{B} - \sigma V)$.

\end{enumerate}

\item Bob creates a key pair $sk_B = (\phi_{B}, r_{B})$ and $\hat{pk}_B = (E_{B}, \hat{G}_{B}, \hat{H}_{B})$, i.e, his private and public key, respectively;

\item He sends $\hat{pk}_B$ to Alice. Then, she performs two checks:

\begin{itemize}
\item If $\hat{G}_{B}, \hat{H}_{B}$ $ \in E_{B}[\ell_A^{e_A}]$, i.e, $\ell_A^{e_A} G_B = \ell_A^{e_A} H_B = \mathcal{\omicron}_B \in E_B$ since $\{P_A, Q_A\} \subset  E_0[\ell_A^{e_A}]$;
\item If the points $U$, $V$ $ \in E_{B}[\ell_A^{e_A}]$. This ensure that the pair of points $(G_{B} + U, H_{B} + V)$ and $(G_{B} - U, H_{B} - V)$ are generated by $E_{B}[\ell_A^{e_A}]$. Otherwise, $\hat{pk}_B$ will be rejected ($\perp$) and the protocol is restarted to execute another bit string $w$.  
\end{itemize}

\end{enumerate}

\subsubsection{Generation of secret keys}\label{gen:key}

\paragraph*{Setup - Sender}

\begin{enumerate}

\item Alice computes:

\begin{enumerate}
\item $\forall i \in \bin$, $ker(\phi'_{A_{i}})$ = $\langle (\hat{G}_{B} + iU) + r_{A}(\hat{H}_{B} + iV) \rangle$;

\item $\phi'_{A_{i}} : E_{B} \rightarrow E_{BA_{i}}$;

\item $\jmath_{i} \leftarrow \jmath(E_{BA_{i}});$

\item $k_{i} = \mathcal{H}( \jmath_{i}).$

\end{enumerate}

\end{enumerate}

\paragraph*{Setup - Receiver}

\begin{enumerate}

\item Bob computes: 

\begin{enumerate}
\item $ker(\phi'_{B}) = \langle G_{A} + r_{B}H_{A} \rangle$;

\item $\phi'_{B} : E_{B} \rightarrow E_{AB}$;

\item $\jmath_{\sigma} \leftarrow j(E_{AB})$;

\item $k_{\sigma} = \mathcal{H}(\jmath_{\sigma}).$ 
\end{enumerate}

\end{enumerate}

\subsubsection{Encryption and Decryption}

\begin{enumerate}

\item $\forall i \in \bin$, Alice encrypts $m_{i}$. Then, $c_{i} \leftarrow \mathcal{E}( k_{i}, m_{i} )$. After that, she sends $(c_{0}, c_{1})$ to Bob;

\item He decrypts and gets $m_{\sigma} \leftarrow \mathcal{D} (k_{\sigma}, c_{\sigma})$;

\end{enumerate}

\section{Security analysis of the $\binom{2}{1}$-SIOT protocol}\label{sec:analyze} 

In the next sections, we will present some definitions  for security analysis of the proposed protocol. 

\subsection{Preliminaries}\label{pre:pre}

\begin{definition}\label{pre:negli}

A function $\epsilon(\cdot)$ is negligible in $\underline{n}$, or just negligible, if for every positive polynomial $p(\cdot)$ and all sufficiently large $\underline{n}$ it holds that $\epsilon(\cdot) < 1 / p(\cdot)$. 

\end{definition}

\begin{definition}

A probability ensemble $\mathcal{X} = \{ \mathcal{X}(n, a)\}$ is an infinite sequence of random variables indexed by  $n \in \mathbb{N}$ and $a \in \bin^{*}$. The value $\underline{n}$ will represent a security parameter and $\underline{a}$ will represent the parties' inputs. 

\end{definition}

\begin{definition}\label{pre:indist}

Two distribution ensembles $\mathcal{X} = \{ \mathcal{X}(n, a)\}$ and $\mathcal{Y} = \{ \mathcal{Y}(n, a)\}$ are said to be computationally indistinguishable, denoted by $ \mathcal{X}$ $\overset{\mathbf{c}}{\equiv}$ $\mathcal{Y}$, if for every non-uniform polynomial-time algorithm $\mathcal{D}$ there exists a negligible function $\epsilon(\cdot)$ such that for every $n \in \mathbb{N}$ and $a \in \bin^{*}$. Then, $\vert \Pr [ \mathcal{D} (\mathcal{X}(n, a)) = 1] - \Pr [ \mathcal{D} (\mathcal{Y}(n, a)) =1] \vert \leqslant \epsilon(n)$.

\end{definition} 

\subsection{Computational problems of isogenies between supersingular elliptic curves\\}\label{problem}

In this section, we will see some cases of computational problems from supersingular elliptic curves that were adapted from  \cite{feo:jao}. Therefore, let a supersingular curve $E_{0}$ over $\mathbb{F}_{p^2}$  together with independent bases $\{ P_{A}, Q_{A} \}$ and $\{ P_{B}, Q_{B} \}$ of $E_0[\ell_A^{e_A}]$ and $E_0[\ell_B^{e_B}]$, respectively.  Furthermore, recall that $p$ is a prime of the form defined on section~\ref{parameters}.

\begin{problem}[Decisional Supersingular Isogeny (DSSI) problem]\label{def:dssi}
Let $E_A [\mathbb{F}_{p^2}]$ be another supersingular curve. Decide whether $E_A$ is $\ell_A^{e_A}$-isogenous to $E_0$. 

\end{problem}

\begin{problem}[Computational Supersingular Isogeny (CSSI) problem]\label{def:cssi}
Let $\phi_A: E_0 \rightarrow E_A$ be an isogeny whose kernel is $R_{A}$ = $\langle [m_A]P_A + [r_A]Q_A\rangle$ for some $m_A, r_A \in \mathbb{Z}/ \ell_A^{e_A}\mathbb{Z}$. Given the public key $(E_A, G_A, H_A)$. Determine $R_{A}$. 
\end{problem}

\begin{problem}[Supersingular Computational Diffie-Helmann (SSCDH) problem]\label{prob:sscdh}
Let $\phi_A: E_0 \rightarrow E_A$ be an isogeny whose kernel is $R_A$ = $\langle [m_A]P_A + [r_A]Q_A\rangle$ for some $m_A, r_A \in \mathbb{Z}/ \ell_A^{e_A}\mathbb{Z}$ and let $\phi_B: E_0 \rightarrow E_B$ be an isogeny whose kernel is $R_B$ = $\langle [m_B]P_B + [r_B]Q_B\rangle$ for some $m_B, r_B \in \mathbb{Z}/ \ell_B^{e_B}\mathbb{Z}$. Given the public keys ($E_A, G_A, H_A $) and ($E_B, G_B, H_B$). Determine the $\jmath - invariant$ of $E_0 / \langle [m_A]P_A + [r_A]Q_A , [m_B]P_B + [r_B]Q_B \rangle$

\end{problem}

\begin{problem}[Supersingular Decision Diffie-Hellman (SSDDH) problem]\label{prob:ssddh} 
Given a tuple sampled with probability $1/2$ from one of the following two distributions:

\begin{enumerate}
\item $(E_A, E_B, G_A, H_A , G_B, H_B, E_{AB})$ where $E_A, E_B, G_A , H_A, G_B \, and \, H_B$ are as in the SSCDH problem (\textit{Problem~\ref{prob:sscdh}}) then,

\begin{center}

$E_{AB} \simeq E_0/\langle [m_A]P_A + [r_A]Q_A, [m_B]P_B + [r_B]Q_B \rangle$,

\end{center}

\item $(E_A, E_B, G_A, H_A, G_B, H_B, E_{C})$ where $E_A, E_B, G_A , H_A, G_B \, and \, H_B$ are as in the SSCDH problem (\textit{Problem~\ref{prob:sscdh}}) then,

\begin{center}

$E_{C} \simeq E_0/\langle [\hat{m}_A]P_A + [\hat{r}_A]Q_A, [\hat{m}_B]P_B + [\hat{r}_B]Q_B \rangle$

\end{center}
\end{enumerate}

Let $m_A, r_A, \hat{m}_A, \hat{r}_A  \in \mathbb{Z}/\ell_A^{e_A}\mathbb{Z}$ and $m_B, r_B, \hat{m}_B, \hat{r}_B  \in \mathbb{Z}/\ell_B^{e_B}\mathbb{Z}$. Determine from which distribution the tuple is sample.

\begin{remark}
Each sample has a probability 1/2. Thus, for definition~\ref{pre:indist}, we have that:

\begin{center}

$\{E_A, E_B, G_A, H_A , G_B, H_B , E_{AB}\}  \overset{\mathbf{c}}{\equiv} \{E_A, E_B, G_A, H_A , G_B, H_B , E_{C}\}$ \\ 

\end{center}

\end{remark}

\end{problem}

\subsection{Notations for security analysis}

In the security analysis of the proposed protocol, the followings notations will be used: 

\begin{description}
\item[i.]Application of the $V\acute{e}lu's\footnote{See~\cite{gal}, Corollary 25.1.7.} \: formula \to V\acute{e}lu's \: formula\{ker(\phi), E\}$;

 \item[ii. ] According to~\cite{kalai:2005} in an OT protocol, the replacement of either $m_0$ or $m_1$ by another message $m$ must go unnoticed by the receiver. Let $\tau$ $\in \mathcal{M}$ and $\sigma \in \bin$. Then, Alice's view in executing an OT protocol is denoted by $\{\Omega_{Alice}(Alice(1^{n}, \tau), Bob(1^{n}, \sigma))\}$, where a security parameter is defined by $1^{n}$. Similarly, we denote Bob's view for  $\{\Omega_{Bob}(Alice(1^{n}, \tau), Bob(1^{n}, \sigma))\}$; 
\item[iii. ] When Alice or Bob acts like a dishonest user, we denote them by Alice* and Bob*, respectively.  
\end{description}

\subsection{Some requirements for security analysis}\label{sec:security} 

\emph{A priori}, any secure protocols should resist to any adversarial attack. Thus, to prove that an OT  protocol is secure,~\cite{yl} state that the most important requirements in any security protocol are \emph{correctness} and \emph{privacy}. 

\subsubsection{Correctness}$ $\label{sec:correctness}

Suppose that both Alice and Bob are honest parties taking the $\binom{2}{1}$-SIOT  protocol. Let $\sigma, i \in \bin$ such that $\sigma = i$. Thus, the \textit{correctness} follows the identities below. 
\begin{align*}
 \jmath(E_{BA_{i}}) \simeq& \jmath(E_{B} / \langle (G_{B} - \sigma \cdot U + i \cdot U) + r_{A} \cdot (H_{B} - \sigma \cdot V + i \cdot V) \rangle) \\
                               \simeq& \jmath(\phi_{B}(E_{0}) / \langle(\phi_{B}(P_{A}) - \sigma \cdot U + i \cdot U) + r_{A}(\phi_{B}(Q_{A}) - \sigma \cdot V + i \cdot V) \rangle)\\
                               \simeq& \jmath(\phi_{B}(E_{0}) / \langle \phi_{B}(P_{A}) + r_{A} \cdot \phi_{B}(Q_{A}) \rangle)\\
                               \simeq& \jmath(\phi_{B}(E_{0}) / \langle \phi_{B}(P_{A} + r_{A} \cdot Q_{A}) \rangle)\\
                               \simeq& \jmath(\phi'_{A_{i}}(\phi_{B}(E_{0}))) \simeq \jmath(\phi_{A}(E_{0}) / \langle \phi_{A}(P_{B} + r_{B} \cdot Q_{B}) \rangle)\\
                               \simeq& \jmath(\phi'_{A_{i}}(\phi_{B}(E_{0}))) \simeq \jmath(E_{B}/  \langle \phi_{A}(P_{B}) + r_{B} \cdot \phi_{B}( Q_{B}) \rangle )\\
                               \simeq& \jmath(\phi'_{A_{i}}(\phi_{B}(E_{0}))) \simeq \jmath(E_{B}/  \langle G_{A} + r_{B} \cdot H_{A} \rangle )\simeq \jmath(\phi'_{B}(\phi_{A}(E_{0}))) \simeq \jmath(E_{AB}).
\end{align*}

\subsubsection{Privacy}$ $

In the $\binom{2}{1}$-SIOT protocol, Bob's choice should not be known to Alice. Moreover, at the end of the protocol execution, Bob will not be able to gain any knowledge about the message that he did not decrypt. It should be noted that this privacy stems from the difficulty of solving the computational problems seen in section~\ref{problem}. Finally, to complement the security proof of the proposed protocol, Theorem~\ref{theorem} was elaborated as follows:

\begin{theorem}\label{theorem}

Assume that CSSI, SSCDH, SSDDH problems are hard in a group $E(\mathbb{F}_{p^{2}})$. Then, $\binom{2}{1}$-SIOT protocol ensures privacy between two parties. 

%\end{theorem}

\begin{proof}

The proof is to adapt definition 2.6.1 of~\cite{yl} to $\binom{2}{1}$-SIOT protocol for compatibility with the computational problems mentioned in section~\ref{problem}. Let two messages, $m_0$ and $m_1$, between two parts (Alice and Bob). An OT protocol is private if the following requirements are valid: 

\begin{description}
\item[i.] \textbf{Non-triviality}: If Alice and Bob follow the protocol correctly then, after an execution in which Alice has for input any $m_{0}$, $m_{1}$ $\in$ $\mathcal{M}$ and Bob has input bit $\sigma$ $\in$ $\bin$, the output of Bob is $m_{\sigma}$. In other words, Bob receives $pk_{A}$ and the pair $(c_{0}, c_{1})$ from Alice. Recalling that $pk_{A} \leftarrow (E_{A}, G_{A}, H_{A})$  and $c_{\sigma} \leftarrow \mathcal{E} (k_{\sigma}, m_{\sigma})$ are well defined. Thus, non-triviality follows from the fact that 

\begin{center}

$V\acute{e}lu's \, formula\{ \langle G_A + r_B H_A \rangle  , E_{A} \}$ $\Rightarrow$ $E_{AB}$ $\therefore$ \\

$\mathcal{H}(\jmath(E_{AB}))$ $\Rightarrow$ $k_{\sigma}$. 

\end{center}

Therefore, Bob recovers $k_{\sigma}$ implying $m_{\sigma} \leftarrow \mathcal{D}(k_{\sigma}, c_{\sigma})$ such that $\sigma$ is a unique binary value secretly chosen by him. Furthermore, upon receiving $pk_{A}$, Bob will not be able to compute Alice's private key, i.e, $\sk_A$ = $(\phi_{A}, r_{A})$. If that could be possible, there would be a violation of the CSSI problem difficult hypothesis.\\

\item[ii.] \textbf{Privacy in the case of a dishonest Bob:} Let $\hat{pk}_{B} \leftarrow (E_B, \hat{G}_B, \hat{H}_B)$ denotes Bob's public key sent to Alice. Recall that $\hat{G}_{B} \leftarrow G_{B}$, $\hat{H}_{B} \leftarrow H_{B}$, if $\sigma = 0$ and $\hat{G}_{B} \leftarrow( G_{B} - U)$, $\hat{H}_{B} \leftarrow (H_{B} - V)$, if $\sigma = 1$. In addition, there is Alice's public key $pk_{A} := (E_{A}, G_{A}, H_{A})$ sent to Bob and a unique value of  $\jmath - invariant$ $\jmath_{\sigma} = \jmath(V\acute{e}lu's \, formula\{ \langle G_A + r_B H_A \rangle  , E_{A} \})$ computed by Bob upon receiving Alice's well-defined public key $pk_{A}$. After that, $\forall i \in \bin$, Alice will compute $\jmath_{i} = \jmath(V\acute{e}lu's \, formula\{ \langle ( \hat{G}_B + iU) + r_A (\hat{H}_B + iV) \rangle  , E_{B} \})$, $i.e$, $\jmath_{0}$ and $\jmath_{1}$. Moreover, Alice will share a unique secret key with Bob. Thus, Alice's privacy is based on the following fact: Bob cannot compute both values of the invariants $\jmath_{0}$ and $\jmath_{1}$ ($\jmath_{0} \neq \jmath_{1}$), if the hypothesis of the SSCDH problem is difficult. In other words, Bob will be able to compute a unique invariant $\jmath_{\sigma}$.  

%Moreover, the Alice's public key, say $pk_{A} := (E_{A}, G_{A}, H_{A})$, sent to Bob and a unique $\jmath - invariant$ $\jmath_{\sigma} = \jmath(V\acute{e}lu's \, formula\{ \langle G_A + r_B H_A \rangle  , E_{A} \})$ computed by him upon receiving $pk_{A}$ are well defined. After that, $\forall$ $i \in \bin$, Alice will compute $\jmath_{i} = \jmath(V\acute{e}lu's \, formula\{ \langle ( \hat{G}_B + iU) + r_A (\hat{H}_B + iV) \rangle  , E_{B} \})$, $i.e$, $\jmath_{0}$ and $\jmath_{1}$. In addition, recalling the correctness requirement from subsection ~\ref{sec:correctness}, Alice will share a unique secret key with Bob. Thus, Alice's privacy is based on the fact that Bob cannot compute both values of the invariants $\jmath_{0}$ and $\jmath_{1}$ if the SSCDH problem is difficult. In other words, Bob will be able to compute a single value of an invariant $\jmath_{\sigma}$. 

%\begin{lemma}
%
%Bob cannot compute two $j-invariants$ $j_{0}$ and $j_{1}$ ($j_{0} \neq j_{1}$) whether  SSCDH problem is hard. \label{def:lemma2} 
%
%\begin{proof}
%
%If Bob could compute $j_{0}$ and $j_{1}$ then it would be a violation of the SSCDH  problem. In other words, Bob will be able to compute just an unique $j-invariant$, depending on the chosen value of $b$. \qed
%
%\end{proof}
%
%\end{lemma}

Let $\sigma \in \bin$ an auxiliary input and another input with tuple $m_{0}$, $m_{1}$, $m$ $\in \mathcal{M}$. Thus, another way to view the Alice's privacy is that Bob's first message, denoted by Bob* $(1^{n}, \sigma)$, determines whether it should receive $m_{0}$ or $m_{1}$. For example, if it determines that it should receive $m_{0}$, then its view when Alice's input is $(m_{0}, m_{1})$ is indistinguishable from its view when Alice's input is $(m_{0}, m)$. Evidently, this implies that Bob cannot learn anything about $m_{1}$ when it receives $m_{0}$ and \emph{vice versa}. Hence, \\ 

\begin{equation*}
\resizebox{0.91\hsize}{!}{%
$\{\Omega_{Bob^{*}} (Alice(1^{n}, (m_{0}, m_{1})); Bob^{*}(1^{n}, \sigma))\}_{n \in \mathbb{N}} \overset{\mathbf{c}}{\equiv} \{\Omega_{Bob^{*}} (Alice(1^{n}, (m_{0}, m)); Bob^{*}(1^{n}, \sigma))\}_{n \in \mathbb{N}}$
}
\end{equation*}
\begin{center}
or
\end{center}
\begin{equation*}
\resizebox{0.91\hsize}{!}{%
$\{\Omega_{Bob^{*}} (Alice(1^{n}, (m_{0}, m_{1})); Bob^{*}(1^{n}, \sigma))\}_{n \in \mathbb{N}}\overset{\mathbf{c}}{\equiv}\{\Omega_{Bob^{*}} (Alice(1^{n}, (m, m_{1}));  Bob^{*}(1^{n}, \sigma))\}_{n \in \mathbb{N}}$ 
}
\end{equation*}

\item[iii.] \textbf{Privacy in the case of a dishonest Alice:} Note that this requirement shows that Alice cannot distinguish Bob's possible secret choices, i.e, when bit $\sigma$ is set to 0 or 1. In other words, she simply visualizes a $pk_B$ public key. Then, the receive's privacy will be checked by following Lemma: 

\begin{lemma}\label{sec:lemma1}

Alice by inputting $\hat{pk}_{B}$ cannot guess $\sigma$ with probability greater than $1 / 2 + \epsilon(n)$, for some negligible function $\epsilon(n)$ and $\forall$ n $\in \mathbb{N}$

\begin{proof}

We can assume that by receiving $\hat{pk}_{B}$ and not knowing the value of Bob's bit $\sigma$, Alice cannot distinguish the pairs of tuples $\{(E_{B}, G_{B}, H_{B})\}$ and $\{(E_{B}, (G_{B} - U), (H_{B} - V)\}$, $i.e$, for some $\hat{G}_{B}, \hat{H}_{B} \in E_{B}[\ell^{e_{A}}_{A}]$  such that  $\Pr[(E_{B}, {G}_{B}, {H}_{B}) = (E_{B}, \hat{G}_{B}, \hat{H}_{B})]$ = $\Pr[(E_{B}, {G}_{B} - U, {H}_{B} - V) = (E_{B}, \hat{G}_{B}, \hat{H}_{B}))]$ which is independent of $\sigma$ . Thus, the difficulty of this indistinguishability between these tuples is based on SSDDH problem. We have that:

\begin{center}

$\{ (E_{B}, G_{B}, H_{B})\}$ $\overset{\mathbf{c}}{\equiv}$ $\{E_{B}, ({G}_{B} - U), ({H}_{B} - V) \}$

\end{center}

According to~\cite{phi}, a distinguisher is a probabilistic algorithm that describes the advantages of an adversary's advantage. Then, suppose that, by contradiction, there is a probabilistic distinguisher $\Theta$ of polynomial time and a non-negligible function $\epsilon$ such that $\forall n \in \mathbb{N}$,

\begin{center}

$\vert P_{r} [\Theta(E_{B}, G_{B}, H_{B}) = 1] -  P_{r} [\Theta(E_{B}, ({G}_{B} - U), ({H}_{B} - V) \}) = 1] \vert$ $\geqslant$ $\epsilon(n)$,

\end{center}  

Then, by subtracting and adding the following term,

\begin{center}

$P_{r} [\Theta(E_{B}, ({G}_{B} - U - R), ({H}_{B} - V - S)) = 1]\footnote{$ R, S  \in E_{B}[\ell^{e_{A}}_{A}]$.}$

\end{center}

We have that

$\vert P_{r} [\Theta(E_{B}, G_{B}, H_{B}) = 1] -  P_{r} [\Theta((E_{B}, ({G}_{B} - U), ({H}_{B} - V)) = 1] \vert \leqslant \vert P_{r} [\Theta(E_{B},\\ G_{B}, H_{B}) = 1] - P_{r} [\Theta(E_{B}, ({G}_{B} - U - R), ({H}_{B} - V - S)) = 1]\vert + \vert P_{r} [ \Theta(E_{B}, ({G}_{B} - U - R), ({H}_{B} - V - S)) = 1] -  P_{r} [ \Theta(E_{B}, ({G}_{B} - U), ({H}_{B} - V)) = 1] \vert $ \\

By contradiction, We suppose that 

\begin{equation}\label{eq:01}
\resizebox{0.85\hsize}{!}{%
$\vert P_{r} [\Theta(E_{B}, G_{B}, H_{B}) = 1] -  P_{r} [\Theta(E_{B}, (G_{B} - U - R), (H_{B} - V - S)) = 1]  \vert \geqslant \dfrac{\epsilon(n)}{2}$  
}
\end{equation}
\begin{center}
or
\end{center}
\begin{equation}\label{eq:02}
\resizebox{0.85\hsize}{!}{%
$\vert P_{r} [\Theta(E_{B}, (G_{B} - U - R), (H_{B} - V - S)) = 1] -  P_{r} [\Theta(E_{B}, (G_{B} - U), (H_{B} - V)) = 1]  \vert \geqslant \dfrac{\epsilon(n)}{2}$  
}
\end{equation}

Suppose that (3.1) holds. Thus, we can construct a distinguisher $\tilde{\theta}$ for the SSDDH problem  that works as follow: Upon input $\hat{p}k_{B} \leftarrow \{ (E_{B}, ({G}_{B} - U), ({H}_{B} - V)) \}$, the distinguisher $\tilde{\Theta}$ randomly chooses the pair of points $R, S$. Hence, $\hat{p}k'_{B} \leftarrow \{ (E_{B}, ({G}_{B} - U - R), ({H}_{B} - V - S))\}$. On the other hand,  if $\hat{p}k_{B} \leftarrow \{(E_{B}, G_{B}, H_{B}) \}$ then, $\hat{p}k'_{B} \leftarrow \{E_{B},  (G_{B} - R), (H_{B} - S) \}$. Note that the pairs of points $U$ and $V$ are not used in the last tuple $\hat{p}k'_{B}$. However, these points as points $R$ and $S$ from to the same group $E_{B}[\ell^{e_{A}}_{A}]$ and could also be randomly chosen by $\tilde{\Theta}$, say points $U$ and $V$. Thus, we can have that $\hat{p}k'_{B} \leftarrow \{(E_{B},G_{B}, H_{B})\}$ and\\

$\vert P_{r} [\tilde{\Theta}(E_{B}, G_{B}, H_{B}) = 1] -  P_{r} [\tilde{\Theta}(E_{B}, ({G}_{B} - U), ({H}_{B} - V)) = 1] \vert =  \vert P_{r} [\Theta(E_{B},\\ G_{B}, H_{B}) = 1] -  P_{r} [\Theta(E_{B}, ({G}_{B} - U - R), ({H}_{B} - V- S)) = 1]  \vert \geqslant \dfrac{\epsilon(n)}{2}$,\\

in contradiction to the SSDDH problem. An analogous analysis follows in the case where (3.2) holds. The proof of Bob's privacy is concluded by noting that $\{(E_{B}, G_{B}, H_{B})\}$, $\{(E_{B}, ({G}_{B} - U), ({H}_{B} - V))\}$, regardless of the value of $\sigma$, are indistinguishable in Alice's view. In other words, let $\tau \in \bin^{*}$ be an auxiliary input. Thus, \\

\begin{center}

$\{\Omega_{Alice^{*}} (Alice^{*} (1^{n}, \tau), Bob(1^{n}, 0))\}$  $\overset{\mathbf{c}}{\equiv}$ $\{\Omega_{Alice^{*}} (Alice^{*} (1^{n}, \tau), Bob(1^{n}, 1))\}$.

\end{center}

According to Lemma~\ref{sec:lemma1}, the privacy of Bob follows from SSDDH problem over the group $E_{B}[\ell^{e_{A}}_{A}]$. \qed

\end{proof}
\end{lemma}

\end{description}

Therefore, the above requirements are related to the computacional problems of isogenies in supersingular elliptic curves and they ensure the privacy of the $\binom{2}{1}$-SIOT protocol  \qed
\end{proof}
\end{theorem}

\vspace{-3cm}

\subsection{Algebraic security analysis of the $\binom{2}{1}$-SIOT protocol}
\vspace{-2cm}
      Considering the case of a dishonest Alice, she will use a \emph{Weil} pairing-based distinguisher  for  trying to find out the secret value $\sigma$ from honest Bob. In the second situation, the roles will be inverted, $i.e$, Alice will be considered an honest sender and Bob a dishonest receiver. Thus,  an analysis is performed in such a way that some algebraic conditions must be obeyed so that Bob is not able to decipher both of Alice's messages.
      
\vspace{-3cm}

\subsubsection{Preventing a \emph{Weil} pairing-based distinguisher from a possible Alice's dishonesty}\label{sec:indist} $ $

Considering the situation where Alice* (the dishonest sender) receiving the information $(E_{B}, \hat{G}_{B}, \hat{H}_{B})$ from Bob, \emph{a priori}, does not know whether to receive  $(E_{B}, G_{B}, H_{B})$ or $(E_{B}, G_{B} - U, H_{B} - V)$. Alice might consider using the \emph{Weil} pairing to distinguish between these two values. 
In what follows, all pairings have order $\ell_A^{e_A}$. After all, the correct points $G_B = \phi_B(P_A)$ and $H_B = \phi_B(Q_A)$ are known to satisfy $e(G_B, H_B) = e(P_A, Q_A)^{\ell_A^{e_A}}$: if this relation does not hold for both of $(\hat{G}_B, \hat{H}_B)$ or $(\hat{G}_B + U, \hat{H}_B + V)$, it would reveal which key Bob has chosen.
More generally, because Alice can add any multiple of $(U, V)$ to $(\hat{G}_B, \hat{H}_B)$ and look for such a mismatch, one must have $e(\hat{G}_B + \lambda U, \hat{H}_B + \lambda V) = e(G_B, H_B) = e(P_A, Q_A)^{\ell_A^{e_A}}$ for any $\lambda \in\mathbb{Z}/ {\ell_A^{e_A}} \mathbb{Z}$. Recalling that $U, V, G_{B}, H_{B} \in E_{B}[\ell^{e_{A}}_{A}]$ then, they can be wrote as a linear combination, i.e, $U = \alpha G_B + \beta H_B$, $V = \gamma G_B + \delta H_B$ such that $\alpha, \beta, \gamma, \delta$ $\in \mathbb{Z} / \ell^{e_{A}}_{A} \mathbb{Z}$. Thus, this condition means that:

\pagebreak

\begin{align*}
e(G_B + \lambda U, H_B + \lambda V) &=     e(G_B + \lambda \alpha G_B + \lambda \beta H_B, H_B + \lambda \gamma G_B + \lambda \delta H_B)\\
                    &=          e((1 + \lambda \alpha)G_B + \lambda \beta H_B, \lambda \gamma G_B + (1 + \lambda \delta) H_B)\\
                    &=          e((1 + \lambda \alpha)G_B, \lambda \gamma G_B)\\
                    &\;\cdot\,\,e((1 + \lambda \alpha)G_B, (1 + \lambda \delta) H_B)\\
                    &\;\cdot\,\,e(\lambda \beta H_B, \lambda \gamma G_B)\\
                    &\;\cdot\,\,e(\lambda \beta H_B, (1 + \lambda \delta) H_B)\\
                    &=          1\\
                    &\;\cdot\,\,e(G_B, H_B)^{(1 + \lambda \alpha)(1 + \lambda \delta)}\\
                    &\;\cdot\,\,e(H_B, G_B)^{\lambda \beta \lambda \gamma}\\
                    &\;\cdot\,\,1\\
                    &=          e(G_B, H_B)^{(1 + \lambda \alpha)(1 + \lambda \delta) - \lambda^2 \beta \gamma}\\ 
                    &=     e(G_B, H_B),
\end{align*}

hence it is necessary that $(1 + \lambda \alpha)(1 + \lambda \delta) - \lambda^2 \beta \gamma = 1 \pmod {\ell_A^{e_A}}$, or equivalently 
$\lambda (\alpha + \delta) + \lambda^2 (\alpha \delta - \beta \gamma) = 0 \pmod {\ell_A^{e_A}}$. This must hold for \emph{any} choice of $\lambda$, in particular those that are invertible mod $\ell_A^{e_A}$, and hence it must hold that $\lambda (\alpha \delta - \beta \gamma) = -(\alpha + \delta) \pmod {\ell_A^{e_A}}$. Once more, this can only hold for \emph{any} $\lambda$ if $\alpha \delta - \beta \gamma = 0 \pmod {\ell_A^{e_A}}$ and $\alpha + \delta = 0 \pmod {\ell_A^{e_A}}$, or equivalently, $\delta = -\alpha \pmod {\ell_A^{e_A}}$ and $\alpha^2 + \beta\gamma = 0 \pmod {\ell_A^{e_A}}$. Therefore, in principle, such conditions should be obeyed to avoid Alice to find out Bob's choice. \qed

\subsubsection{Possible decryptions from a possible dishonest Bob}\label{sec:obs} $ $

Recalling $U, V \in E_B[\ell_A^{e_A}]$ are linearly independent points, and write $U = \alpha G_B + \beta H_B$, $V = \gamma G_B + \delta H_B$. Suppose Alice receives an information  $(E_B, \hat{G}_B, \\ \hat{H}_B)$ from Bob*. Then, Alice will compute  actually the degree-$\ell_A^{e_A}$ isogeny $\phi'_{A_{0}} : E_B \rightarrow E_{BA_{0}}$ whose kernel is $ker(\phi'_{A_{0}})$ =  $\langle G_B + r_A H_B \rangle$ and $\phi'_{A_{1}} : E_{B} \rightarrow E_{AB_{1}}$ whose kernel is $ker(\phi_{A_{1}})$ = $\langle (G_B + U) + r_A (H_B + V) \rangle$. It should be noted\footnote{See Theorem 9.6.18 from \cite{gal} and Proposition 12.12 from \cite{law}.} that if  $ker(\phi'_{A_{0}})$ $\subseteqq$  $ker(\phi_{A_{1}})$ then, $E_{BA_{0}}$ is isomorphic to $E_{BA_{1}}$ $i.e.$, $E_{BA_{0}} \cong E_{BA_{1}}$. Moreover, if $\phi_{A_{1}}$ is separable then there is a unique isogeny $\hat{\phi_{A}} : E_{BA_{0}} \rightarrow E_{BA_{1}}$. Now $(G_B + U) + r_A (H_B + V) = (G_B + \alpha G_B + \beta H_B) + r_A (H_B + \gamma G_B + \delta H_B) = (1 + \alpha + \gamma r_A) G_B + (r_A + \beta + \delta r_A) H_B$. Hence, by inspection, this point can only be in $\langle G_B + r_A H_B \rangle$  with the following conditions:

\begin{enumerate}

\item $(1 + \alpha + \gamma r_A)$ is invertible mod $\ell_A^{e_A}$ (i.e. if $\ell_A \nmid 1 + \alpha + \gamma r_A$); 

\item $(r_A + \beta + \delta r_A)/(1 + \alpha + \gamma r_A) = r_A \pmod {\ell_A^{e_A}}$, which means $\gamma r_A^2 + (\alpha + \delta)r_A - \beta = 0 \pmod {\ell_A^{e_A}}$ and hence $\gamma r_A^2 + (\alpha - \delta)r_A - \beta = 0 \pmod {\ell_A}$. Thus, a simple constraint on the coefficients ensures that the last equation has no solution then, just force $\ell_A \mid \gamma$ and $\ell_A \mid (\alpha - \delta)$, but $\ell_A \nmid \beta$.

\end{enumerate}

Therefore, it is important that this equation has no solution because, otherwise, if Alice and Bob* cannot control the coefficients $\alpha$, $\beta$, $\gamma$, $\delta$ apart from ensuring conditions as above, Bob* could be able to decrypt both messages from Alice. \qed 

\subsubsection{Wrapping up the conditions}\label{summing}$ $ 

In this section we will consider the three conditions on $\alpha$, $\beta$, $\gamma$, and $\delta$ based on the equations obtained in sections ~\ref{sec:indist} and ~\ref{sec:obs} to ensure $\binom{2}{1}$-SIOT protocol security in a scenario where Alice and Bob are dishonest  parties. Thus, conditions on $\alpha$, $\beta$, $\gamma$, and $\delta$ are obtained that guarantee that Alice will not be able to get the secret choice of Bob's bit $b$ and he will not be able to decipher both pairs $c_{0}$ and $c_{1}$ sent by Alice. Combining these relations yields $\gamma = -\alpha^2/\beta  \pmod {\ell_A^{e_A}}$ since $\beta$ is certainly invertible mod $\ell_A^{e_A}$. In particular, this means $V = -(\alpha/\beta)U$.

Additionally, in appendix~\ref{siot:sym} we will see the application of a symmetric pairing to analyze other possible conditions relative to the coefficients of points $U$ and $V$. Moreover, appendix~\ref{uv}  shows the process of sharing of these points.

\section{Conclusion of the security of the $\binom{2}{1}$-SIOT protocol}\label{conclusion:sec}

The security proof of the SIOT protocol is based on three parts, namely:
\begin{description}
\item[ i. ] The inherent security characteristics of~\cite{feo:jao}, that is, the computational problems mentioned in section~\ref{problem}; 

\item[ii. ] The privacy between a sender and receiver in a communication channel by means of Theorem~\ref{theorem};

\item[iii. ] An algebraic analysis that used \emph{Weil}'s pairing that defined some conditions necessary for a dishonest sender to does not cheat the security against an honest receive and \emph{vice versa}.
\end{description}

\section{Implementation of the $\binom{2}{1}$-SIOT protocol}\label{implementation:siot} 

The $\binom{2}{1}$-SIOT protocol was implement in the \emph{phyton} language, using a MacBook Air with a 1.6 GHz Intel Core i.5 processor, 4GB, 1.600MHz and DDR3 memory. Table~\ref{siot:table} shows the values of the prime numbers $p$ that were used in the implementation of the proposed protocol. 

\begin{table}[h]
   \centering
   
    \caption{Values used for $p$ in the $\binom{2}{1}$-SIOT protocol.}
    \begin{tabular}{ c  c  c  }
      \toprule
      $p = \ell^{e_{A}}_{A} \ell^{e_{B}}_{B} f \pm 1$ & Value & Size (bits)   \\
      \midrule
      $(3^4 \cdot 5^3 \cdot4) - 1$  & 40.499 & 16 \\
      $(3^6 \cdot 5^3 \cdot4) - 1$  & 364.499 & 19 \\
      $(3^7 \cdot 5^4 \cdot4) - 1$  & 5.467.499 & 23 \\
      $(3^{11} \cdot 5^6 \cdot4) - 1$  & 11.071.687.499 & 34  \\
      \bottomrule
    \end{tabular}
  
    \label{siot:table} 
    
  \end{table}
  
  In this work, it was only possible to use a maximum value of $p$ corresponding the size of 34 bits. Evidently, such $p$ values are insufficient to guarantee the security of the proposed protocol because~\cite{jao:reza} and~\cite{feo:jao} consider that the size of the $p- value$ for the security of a post-quantum cryptographic protocol based on isogenies of  elliptic curves is at least equal to 512 bits. However, this does not invalidate the proof of concept of  $\binom{2}{1}$-SIOT.
  
  \section{Performance estimate between some OT protocols}\label{perform:ot}

In table~\ref{tab:per} and figure~\ref{fig:estimar}, the number of types of operations by sender and receiver was verified in $\binom{2}{1}$-SIOT and both protocols SIDH-OT and WSW-OT from~\cite{vanessa:2019}. Thus, multiplications with scalar and point additions are denoted by \emph{Multi} and \emph{Add}, respectively. Furthermore, calculations for isogenies and pairing are denoted by \emph{Iso} and \emph{Png}, respectively. Therefore, we estimate that the $\binom{2}{1}$-SIOT protocol has slightly better performance than other two protocols. \\

\begin{table}[h]
\centering
\caption{Performance estimation.}
\begin{tabular}{|c|c|c|c|c|c|c|c|}
\hline
\multirow{2}{*}{Protocol}         & \multicolumn{3}{c|}{Sender}								    & \multicolumn{3}{c|}{Receiver}									  & \multirow{2}{*}{Png} \\ \cline{2-7}
                                               & \multicolumn{1}{l|}{Mult} & \multicolumn{1}{l|}{Add} & \multicolumn{1}{l|}{Iso} & \multicolumn{1}{l|}{Mult} & \multicolumn{1}{l|}{Add} & \multicolumn{1}{l|}{Iso} &					 \\ \hline
SIDH-OT				     & 8				    & 4				   & 4					    & 2					  & 2					  & 2				  & 2					 \\ \hline
WSW-OT				     & 3                                    & 3				   & 5			                     & 2					  & 1					  & 2				  & 4					 \\ \hline
SIOT				             & 3                                    & 5				   & 3					    & 2					  & 4					  & 2				  & -					 \\ \hline

\end{tabular}
\label{tab:per}
\end{table}

\begin{figure}[h]

\centering
\begin{tikzpicture}
\begin{axis}[
    ybar,
    enlargelimits=0.25,
    legend style={at={(0.5,-0.15)},
      anchor=north,legend columns=-1},
    ylabel={\#Total$\:$ Operations},
    symbolic x coords={SIOT,SIDH-OT,WSW-OT},
    xtick=data,
    nodes near coords,
    nodes near coords align={vertical},
    ]
\addplot coordinates {(SIOT,5) (SIDH-OT,10) (WSW-OT,5)};
\addplot coordinates {(SIOT,9) (SIDH-OT,6) (WSW-OT,4)};
\addplot coordinates {(SIOT,5) (SIDH-OT,6) (WSW-OT,7)};
%\addplot coordinates {(SIOT,4) (SIDH-OT,8) (WSW-OT,7)};
\addplot coordinates {(SIOT,0) (SIDH-OT,2) (WSW-OT,4)};
\legend{Multiplications,Additions,Isogenies,Pairings}
\end{axis}

\end{tikzpicture}

\caption{Performance estimation.}

\label{fig:estimar}

\end{figure}

\section{Conclusion}\label{sec:conclusion}

In this paper, a proposal for a \emph{post quantum} protocol called SIOT is presented. Its security is based on the difficulty of an opponent to calculate isogenies between supersingular elliptic curves and the inspiration of the relative simplicity of the OT protocol of~\cite{cho:orl} to ensure privacy between the sender and the receiver. With respect to this privacy, it was important to elaborate a theorem, matching a privacy definition of~\cite{yl} with the computational problems of isogenies of~\cite{feo:jao}, considering a hypothetical scenario between a dishonest sender and an honest receiver and \emph{vice versa}. Finally, an algebraic analysis with \emph{Weil} pairing defined certain necessary conditions so that there were not  security and privacy violations in the proposed protocol. 

\appendix 

\section{Isogenies}\label{apend:iso}
\hspace{.5cm}In short, isogeny-based cryptography utilizes unique algebraic maps between elliptic curves that satisfy group homomorphism. This original idea introduced by \cite{bunov} detailed a \emph{Diffie-Hellman} cryptosystem based on the hardness of computing isogenies between ordinary elliptic curves. Nevertheless, \cite{chi} developed a \emph{quantum} algorithm that could compute isogenies between ordinary curves in subexponential time. This algorithm uses the fact that the structure of the elliptical group is commutative. Thus, ~\cite{feo:jao} adapted the isogeny-based key exchange protocol to be based on the difficulty of computing isogenies between supersingular elliptic curves, which does not have commutative endomorphism ring.

\begin{definition}

Let $E_{1}$ and $E_{2}$ be elliptic curves over $\mathbb{F}_{p}$. An isogeny over $\mathbb{F}_{p}$ is a morphism $\phi \colon E_{1} \to E_{2}$ over $\mathbb{F}_{p}$ such that  $\phi(\mathcal{\omicron}_{E_{1}})$ = $\mathcal{\omicron}_{E_{2}}$  is a group homomorphism. The zero isogeny is the constant map $\phi \colon E_{1} \to E_{2}$ given by $\phi(P) = \mathcal{\omicron}_{E_{2}}$ for all $P \in E(\mathbb{\bar{F}}_{p})$.
 If there is an isogeny between two elliptic curves $E_{1}$ and $E_{2}$ then:
 
 \begin{description}
 
\item[i.] $E_{1}$ and $E_{2}$ are isogenous;

\item[ii.] $\#E_{1}(\mathbb{F}_{p}) = \#E_{2}(\mathbb{F}_{p})$\footnote{See~\cite{tate}.};

\item[iii.]  $E_{1}$ and $E_{2}$ have the same $\jmath-invariant$ if and only if  $E_{1} \backsimeq  E_{2}$ over $\mathbb{\bar{F}}_{p}$ ($i.e.$ exists an isomorfism from $E_{1}$ to $E_{2}$)\footnote{See Theorem 9.3.6 from~\cite{gal}.}.

\end{description}

\end{definition}

\begin{definition}

Let $E_{1}$ and $E_{2}$ be elliptic curves over $\mathbb{F}_{p}$ and $\phi \colon E_{1} \to E_{2}$ over $\mathbb{F}_{p}$. The degree of a non-zero isogeny is the degree of the morphism. 
The degree of the zero isogeny is $0$. If there is an isogeny of degree $\ell$  between elliptic curves $E_{1}$ and $E_{2}$ then, they are $\ell$-isogenous.

\end{definition}

\begin{definition}\label{def:eleven}

Let $E_{1}$ and $E_{2}$ be elliptic curves over $\mathbb{F}_{p}$ and $\phi \colon E_{1} \to E_{2}$ an isogeny. Then, the kernel of an isogeny is $ker(\phi) = \{ P \in E_{1}(\bar{\mathbb{F}_{p}}) : \phi(P) = \mathcal{\omicron}_{E_{2}}\}$.

\begin{remark}

We can denote $E_{2} = E_{1}/ ker(\phi)$.

\end{remark}

\end{definition}

\begin{definition}\label{def:separable:kernel}

A non-zero isogeny separable $\phi \colon E_{1} \to E_{2}$  over $\mathbb{F}_{p}$ of $\ell$ - degree has  $\# ker(\phi) = \ell$. % $i.e,$ the number of kernel elements equals to $\ell$. %\cite{gal}.

\end{definition}

\begin{definition}

Let $\phi \colon E_{1} \to E_{2}$  and $\hat{\phi} \colon E_{2} \to E_{3}$ be two isogenies with $\ell$-degree and $\hat{\ell}$-degree, respectively. Then, their composition is an isogeny $\hat{\phi}(\phi) \colon E_1 \to E_3$ with ($\ell  \cdot \hat{\ell}$)-degree.

\end{definition}

\begin{proposition}

Let $E_{1}$ be an elliptic curve over $\mathbb{F}_{p}$ and $\mathbb{G}$ a finite subgroup of $E_{1}(\mathbb{\bar{F}}_{p})$ that is defined over $\mathbb{F}_{p}$.  Then, there is an unique elliptic curve $E_{\mathbb{G}}$ and a separable isogeny $\phi \colon E_{1} \to E_{\mathbb{G}} = E_{1} / \mathbb{G}$ such that  $ker(\phi)= \mathbb{G}$.

\end{proposition}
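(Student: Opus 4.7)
The plan is to prove this by giving an explicit construction for existence via V\'elu's formulas and deducing uniqueness from the universal property of the quotient. Since $\mathbb{G}$ is $\mathbb{F}_p$-rational as a set (stable under the action of the absolute Galois group), the sum-over-translates construction will descend from $\bar{\mathbb{F}}_p$ to $\mathbb{F}_p$.

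First, I would construct $E_{\mathbb{G}}$ and $\phi$ explicitly. Writing $\mathbb{G}^{\ast} = \mathbb{G}\setminus\{\mathcal{\omicron}\}$ and letting $x, y$ be the Weierstrass coordinates on $E_1$, define rational functions on $E_1$ by
\[
X(P) = x(P) + \sum_{Q \in \mathbb{G}^{\ast}} \bigl(x(P+Q) - x(Q)\bigr), \qquad Y(P) = y(P) + \sum_{Q \in \mathbb{G}^{\ast}} \bigl(y(P+Q) - y(Q)\bigr).
\]
By construction both $X$ and $Y$ are invariant under translation by any element of $\mathbb{G}$, and because $\mathbb{G}$ is Galois-stable the coefficients appearing in the sums lie in $\mathbb{F}_p$. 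A direct (but tedious) computation due to V\'elu shows that $(X, Y)$ satisfies a new Weierstrass equation whose coefficients are polynomials in the coefficients of $E_1$ and in symmetric functions of the coordinates of points in $\mathbb{G}$; this defines the curve $E_{\mathbb{G}}$ over $\mathbb{F}_p$, and the map $\phi \colon P \mapsto (X(P), Y(P))$ is then an $\mathbb{F}_p$-morphism. A pole-order analysis at $\mathcal{\omicron}_{E_1}$ shows that $\phi$ has degree $\#\mathbb{G}$, and since the fibres over generic points are exactly the $\mathbb{G}$-cosets one obtains $\ker(\phi) = \mathbb{G}$. Separability follows because $\phi$ is unramified outside the preimage of $\mathcal{\omicron}_{E_{\mathbb{G}}}$ (the translates yield an \'etale cover on a dense open).

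For uniqueness, suppose $\phi' \colon E_1 \to E'$ is any other separable isogeny with $\ker(\phi') = \mathbb{G}$. Because $\ker(\phi) = \mathbb{G} \subseteq \ker(\phi')$ and $\phi$ is separable, the standard factorization theorem for isogenies (\emph{cf.}~theorem 9.6.18 in \cite{gal}) provides a unique isogeny $\lambda \colon E_{\mathbb{G}} \to E'$ with $\phi' = \lambda \circ \phi$. Multiplicativity of degrees then forces $\deg(\lambda) = \deg(\phi')/\deg(\phi) = \#\mathbb{G}/\#\mathbb{G} = 1$, so $\lambda$ is an isomorphism. Hence $E_{\mathbb{G}}$ together with $\phi$ is unique up to a unique isomorphism, which is what is meant by the quotient $E_1/\mathbb{G}$.

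The main obstacle will be the V\'elu computation itself: verifying that the rational functions $X$ and $Y$ defined above satisfy a Weierstrass cubic requires a careful case split (treating the $2$-torsion points of $\mathbb{G}$ separately from the non-involution cosets, since $Q$ and $-Q$ collapse to the same translate when $2Q = \mathcal{\omicron}$) and a Riemann--Roch argument to pin down the exact coefficients of the image equation. Rather than reproducing these computations here, I would cite V\'elu's original article and the modern exposition in Silverman's text for the details, and limit the proof proper to the existence/uniqueness outline above.
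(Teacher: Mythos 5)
Your proposal is correct and is essentially the same argument the paper relies on: the paper's ``proof'' consists solely of a citation to Theorem 25.1.6 and Corollary 25.1.7 of Galbraith's book, which are precisely the V\'elu construction for existence and the separable-factorization argument for uniqueness up to isomorphism that you outline. Your sketch accurately reconstructs the content of those cited results (including the correct reading of ``unique'' as unique up to a unique isomorphism compatible with $\phi$), so there is no substantive divergence from the paper.
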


\begin{proof}

See Theorem 25.1.6 and Corollary 25.1.7 from \cite{gal}. \qed

\end{proof}

\section{SIDH key exchange}~\label{apendice:sidh}
\hspace{.5cm}For a better understanding, see the first diagram in figure~\ref{fig:SIDH}. The sender and the receiver choose randomly two secret integers $r_A$ $\leftarrow$ $\ZZ/\ell_A^{e_A}\ZZ$  and $r_B$ $\leftarrow$ $\ZZ/\ell_B^{e_B}\ZZ$, respectively. Thus, the kernel $\langle P_A + r_{A}Q_{A}\rangle$ from sender has order $\ell_A^{e_A}$ and its secret key is computed as the degree $\ell_A^{e_A}$ isogeny $\phi_A :  E_{0} \rightarrow E_{A}$, and its $\pk_A$ is the isogenous curve $E_{A}$ together with images $G_{A} \leftarrow$ $\phi_{A}(P_{B})$ and $H_{A} \leftarrow$ $\phi_{A}(Q_{B})$. 

Similarly, the kernel $\langle P_B + r_{B}Q_{B}\rangle$ from receiver has order $\ell_B^{e_B}$ and its secret key is computed as the degree $\ell_B^{e_B}$ isogeny $\phi_B :  E_{0} \rightarrow E_{B}$, and its $\pk_B$ is the isogenous curve $E_{B}$ together with images $G_{B} \leftarrow$ $\phi_{B}(P_{A})$ and $H_{B} \leftarrow$ $\phi_{B}(Q_{A})$. In short, there is a public key exchange, say $\pk_A$ and $\pk_B$.  

To compute the shared secret $k$, sender uses its secret integers and receiver's public key to compute the degree $\ell_A$  isogeny $\phi^{'}_{A} : E_{B} \rightarrow E_{BA}$ whose kernel is the point $\phi_B(P_{A})$ + $r_{A}$$\phi_{B}(Q_{A})$ = $\phi_{B}(P_{A} + r_{A}Q_{A})$. In the same way, receiver uses its secret integers and sender's public key to compute the degree $\ell_B$  isogeny $\phi^{'}_{B} : E_{A} \rightarrow E_{AB}$ whose kernel is the point $\phi_A(P_{B})$ + $r_{B}$$\phi_{A}(Q_{B})$ = $\phi_{A}(P_{B} + r_{B}Q_{B})$.  It happens that $E_{BA}$ and $E_{AB}$ are isomorphic. Hence, both sender and receiver can compute a shared secret $k$, that is, there is the common $\jmath$-invariant $\jmath(E_{BA}) = \jmath(E_{AB})$. 

Therefore, $\mathcal{H}(\jmath(E_{BA}))$ = $\mathcal{H}(\jmath(E_{AB}))$ = $k$. After that, the sender computes $ c \leftarrow \mathcal{E}_{k}(m)$ and sends it to the receiver that computes $m \leftarrow \mathcal{D}_{k}(c)$. More details, see \cite{feo:jao}.

\begin{figure}[h]
 
 \resizebox{0.95\textwidth}{!}{%
 
 %\begin{minipage}[c]{\textwidth}
 
\fbox{\pseudocode{
 \> \textbf{\emph{Supersingular Isogeny Diffie-Hellman} (SIDH)} \> \\ [0.1\baselineskip][\hline]
\textbf{Sender} \> \> \textbf{Receiver} \\
 \text{Input: } m  \in \mathcal{M}\> \> \text{Input: none}\\
 \text{Output: none}\> \> \text{Output: } m \\
 \text{$r_A \leftarrow \ZZ/\ell_A^{e_A}\ZZ$}\> \> \text{$r_B  \leftarrow \ZZ/\ell_B^{e_B}\ZZ$} \\
  \text{$\phi_A: E_{0} \rightarrow E_{A}$}\>  \> \text{$\phi_B: E_{0} \rightarrow E_{B}$} \\
 \text{$E_{A} \leftarrow E_{0} / \langle P_A + r_{A}Q_{A}\rangle$}\>  \> \text{$ E_{B} \leftarrow E_{0} / \langle P_B + r_{B}Q_{B}\rangle$} \\
 \text{$G_A \leftarrow \phi_A(P_B); H_A \leftarrow \phi_A(Q_B)$}   \> \> \text{$G_B \leftarrow \phi_B(P_A); H_B \leftarrow \phi_B(Q_A)$}  \\
 \pk_A \leftarrow (E_A, G_A, H_A) \>\> \pk_B \leftarrow (E_B, G_B, H_B) \\
  \text{}\>\sendmessageright{centercol=3,top= $\pk_A $} \> \text{$\phi^{'}_{B}: E_{A} \rightarrow E_{AB}$} \\
  \text{$\phi^{'}_{A}: E_{B} \rightarrow E_{BA}$}\> \sendmessageleft{centercol=3,top= $ \pk_B $} \>  \text{}\\
  \text{$E_{BA} \leftarrow E_{B}/ \langle G_{B} + r_{A}H_{B} \rangle$}\> \> \text{$E_{AB} \leftarrow E_{A}/ \langle G_{A} + r_{B}H_{A}\rangle$} \\
  \text{$ k  = \mathcal{H}( \jmath(E_{BA}))$ }\> \> \text{$k  = \mathcal{H}(\jmath(E_{AB}))$} \\
  \text{}\> \sendmessageright{centercol=3,top= $c \leftarrow \mathcal{E}_{k}(m) $} \>  \text{}\\
                     \> \> \text{$m$ = $\mathcal{D}_{k}(c)$} \\}
    } }
    
\caption{SIDH protocol.}
 \label{fig:SIDH}    
    
 \end{figure}
 
 \section{\emph{Oblivious Transfer} protocol}\label{ot:protocol}

 \hspace{.5cm}\emph{Oblivious Transfer} (OT) is a protocol in which a sender transfers one of many pieces of information to a receiver, but remains oblivious as to what piece has been transferred. The original notion of OT was first proposed by \emph{Michael Rabin} in 1981 \cite{rabin:1981} in which a sender sends an encrypted message to a receiver and this one could decrypt such message with probability 1/2. After this,  \cite{gold} presented a general form of OT, named 1-out-of-2 OT, $\binom{2}{1}$ - OT for short, $i.e$, where a sender sends two encrypted messages to a receiver being able to decrypt only one of them.
 
 Many authors have generalized this to $\binom{n}{1}$ - OT where the receiver chooses one message out of $n$ and $\binom{k}{n}$-OT in which the receiver chooses a subset of size $k$ from among $n$ messages. In this work, we will be focused only on $\binom{2}{1}$ - OT.
 
 \subsection{Protocol $\binom{2}{1}$- OT \emph{Chou-Orlandi}}\label{ot} 

\hspace{.5cm}In this appendix, we see the simplified scheme of the random OT proposed in~\cite{cho:orl}. 

\subsection*{Premises}

\begin{enumerate}
\item The scheme from~\cite{cho:orl} works in a primitive additive group $(\mathbb{G}, B, \mathbb{F}_{p}, +)$ of prime order $p$, generated by base point $B$;

\item  Let $s$ be a safety parameter and $\jmath \in \bin$. Thus, the hash function $\mathcal{H}: (\mathbb{G} \times \mathbb{G}) \times \mathbb{G} \rightarrow \{0,1\}^{s}$ is used to generate a cryptographic key $k_{\jmath}$ for use in a symmetric cipher defined by the functions $\mathcal{E}$ (encryption) and $\mathcal{D}$ (decryption), i.e, $c_0 = \mathcal{E}(k_{0}, m_{0})$ and $c_1 = \mathcal{E}(k_{1}, m_{1})$.   \\

\item Abstract view of information exchange from protocol $\binom{2}{1}$- OT \emph{Chou-Orlandi}.

\fbox{%
\centering
\pseudocode{%
  \< \< \\[-0.5\baselineskip]
  Sender \< \sendmessageright*{S} \< Receiver \\
  Sender \< \sendmessageleft*{R} \< Receiver \\ 
  Sender \< \sendmessageright*{(c_0, c_1)} \< Receiver \\
} 
}

\end{enumerate}

\subsection*{Sender and Receiver}

\subsubsection*{Setup - Sender}

\begin{enumerate}
\item Sender secretly chooses a value $y$ $\in$ $\mathbb{F}_{p}$;

\item Sender computes:

\begin{align*}
S &= yB \qquad \qquad   (1) \\
T &= yS;  \qquad \qquad (2) 
 \end{align*}

\item Sender sends  $S$  to  Receiver  which refuses if  $S \notin \mathbb{G}$.
\end{enumerate}

\subsubsection*{Setup - Receiver}

\begin{enumerate}
\item Receiver secretly chooses a value $x$  $\in$ $\mathbb{F}_{p}$;

\item Receiver computes: 

\begin{align*}
R &= b.S + x.B   \qquad (3), \: \: where \, b \in \bin \, is \,  chosen \,  by \,  Receiver; \\ 
\end{align*}

\item Receiver sends $R$ to Sender which refuses if  $R \notin \mathbb{G}$.

\end{enumerate}

\subsection*{Generation of cryptographic keys $k_{\jmath}$,  $\jmath \in \{ 0, 1\}$.}

\begin{enumerate}
\item Sender computes $k_{\jmath} = \mathcal{H}_{(S,R)} (yR - \jmath T)$;  \; \qquad \qquad (4)

\item Receiver computes $k_{b} = \mathcal{H}_{(S,R)} (bS + xB)$.  \qquad \qquad (5)
\end{enumerate}

\subsection*{Encryption and Decryption}

\begin{enumerate}
\item Sender  encrypts and sends $c = (c_0, c_1)$ to Receiver. Recalling $c_{0} = \mathcal{E}(k_{0}, M_{0})$ and  $c_{1}= \mathcal{E}(k_{1}, M_{1})$;

\item Receiver decrypts  and gets  $M_{b} = \mathcal{D}(k_{b} , c_{\jmath})$, $\jmath  \in \{0, 1\}$.
\end{enumerate}

\begin{remark}

 It is verified that a key $k_{\jmath}$, $j \in \{0,1\}$, is computed by  $\mathcal{H}_{(S, R)} [xyB + (b-j)T]$. Hence, at the end of the protocol if both parts are honest then we have that $k_{b} = k_{\jmath}$. In other words, if $\jmath = 0$ then $c = c_{0} = 0$ and $k_{0} = k_{b} = \mathcal{H}_{(S,R)}(xyB)$. Otherwise, if $\jmath = 1$ then $c = c_{1} = 1$ and $k_{1} = k_{b} = \mathcal{H}_{(S,R)}(xyB)$.

\begin{align*} 
k_\jmath &= yR - \jmath T; \\
      &= y(bS + xB) - \jmath T; \qquad \qquad from \, equation \: (3) \\
       &= byS + xyB - \jmath T; \\
       &= bT + xyB - \jmath T; \qquad \qquad from \, equations \: (1) \, and \, (2) \\ 
       &= xyB + (b-\jmath) T.
\end{align*}

\end{remark}

Therefore, we can conclude that if the Receiver chooses $ b \notin \jmath$, he will not share the secret (cryptographic key) with the Sender.

\section{ Linearly independent points}\label{linear:points}

\hspace{.5cm}In this appendix, we present definitions for the understanding of the process that determines the choice of linearly independent points $P_{A}, Q_{A}, P_{B}$ and $Q_{B}$ in the proposed protocol.

\begin{definition}[Frobenius] ~\label{frobenius}
Let $E(\mathbb{F}_q)$ be an ellipic curve, and let $E(\mathbb{F}_{q^k})$ be its $\mathbb{F}_{q^k}$-rational extension.
The \emph{Frobenius} map is the function $\Phi: E(\mathbb{F}_{q^k}) \rightarrow E(\mathbb{F}_{q^k})$ defined by $\Phi(x, y) = (x^q, y^q)$ for any $(x, y) \in E(\mathbb{F}_{q^k})$. $\Phi^i$ denotes its $i$-th self-composition, i.e. for any $P \in E(\mathbb{F}_{q^k})$, $\Phi^i(P) := P$ for $i = 0$, and $\Phi^i(P) = \Phi(\Phi^{i-1}(P))$ for $i > 0$.
\end{definition}

\begin{definition}[Trace]~\label{trace}
Let $E(\mathbb{F}_q)$ be an ellipic curve, and let $E(\mathbb{F}_{q^k})$ be its $\mathbb{F}_{q^k}$-rational extension.
The \emph{trace} map is the function $tr: E(\mathbb{F}_{q^k}) \rightarrow E(\mathbb{F}_{q^k})$ defined by $tr(P) = (1/k)\sum_{i=0}^{k-1}\Phi^i(P)$ where $1/k$ denotes the inverse of $k$ mod the order of $E(\mathbb{F}_{q^k})$. In particular, $k = 2$ for a supersingular curve in characteristic $p > 3$, and $tr(P) = (1/2)(P + \Phi(P))$.
\end{definition}

Hence, the trace map is important in that its eigenspaces, if nontrivial, form two linearly independent groups that can be used to sample points $P_A$, $Q_A$, $P_B$, $Q_B$ efficiently. Moreover, the trace definition assumes that $\gcd(k, \#E(\mathbb{F}_{q^k})) = 1$, which may not be the case, especially in the important setting where $\ell_A = 2$. Thus, for this scenario we also define the $quasi-trace$ map:

\begin{definition}[Quasi-trace]~\label{quasetrace}
Let $E(\mathbb{F}_q)$ be an ellipic curve, and let $E(\mathbb{F}_{q^k})$ be its $\mathbb{F}_{q^k}$-rational extension.
The \emph{quasi-trace} map is the function $tr: E(\mathbb{F}_{q^k}) \rightarrow E(\mathbb{F}_{q^k})$ defined by $tr(P) = \sum_{i=0}^{k-1}\Phi^i(P)$. In particular, $k = 2$ for a supersingular curve in characteristic $p > 3$, and $tr(P) = P + \Phi(P)$.
\end{definition}

\section{Possibility of symmetric pairings in the SIOT}\label{siot:sym}

\hspace{.5cm}Under certain circumstances, it is possible to define a \emph{symmetric} pairing $\hat{e}: E_B[\ell_A^{e_A}] \rightarrow F_{p^2}$. We now analyze the condition under which this can happen.
In what follows, recall that a distortion map is a linear transformation that maps a curve point to a linearly independent point.

The embedding degree for $E_B[\ell_A^{e_A}]$ is only 1, not 2 as it is for $E_0$, because $E_B$ is defined over $F_q$ with $q := p^2$, and since $p = (\ell_A^{e_A} \ell_B^{e_B} f)^2 - 1$, it follows that $\#E_B[\ell_A^{e_A}]  = (\ell_A^{e_A})^2 \mid q - 1$.
Hence a distortion map $\psi$ must map a point $P \in E[\ell_A^{e_A}](F_q)$ to a point $Q \in E[\ell_A^{e_A}](F_q)$ that is linearly independent from $P$, in which case $\psi$ linearly maps a basis $(G_B, H_B)$ to another basis $(G'_B, H'_B)$.

In particular, all coefficients of $\psi$ in basis $(G_B, H_B)$ must be integers mod $\ell_A^{e_A}$. For such a map to be a distortion map, it must have no eigenvectors (otherwise it would fail to map those points to linearly independent points), so we can simply require the characteristic polynomial to have no roots mod $\ell_A^{e_A}$.

In that case, the map $\psi(u G_B + v H_B) := v G_B - u H_B$ could be a suitable distortion map. Its characteristic polynomial is $\lambda^2 + 1$ which has no roots mod $\ell_A^{e_A}$ for a careful choice of $\ell_A$ (e.g. $\ell_A = 3$).
Now define the modified pairing $\hat{e}(P, Q) := e(P, \psi(Q))$ where $e(\cdot)$ is the \emph{Weil} pairing. Then:
\begin{align*}
\hat{e}(a G_B + b H_B, c G_B + d H_B) &= e(a G_B + b H_B, d G_B - c H_B)\\
&= e(a G_B, -c H_B) \cdot e(b H_B, d G_B)\\
&= e(G_B, H_B)^{-ac - bd},\\
\hat{e}(c G_B + d H_B, a G_B + b H_B) &= e(c G_B + d H_B, b G_B - a H_B)\\
&= e(c G_B, -a H_B) \cdot e(d H_B, b G_B)\\
&= e(G_B, H_B)^{-ac - bd},
\end{align*}
so this modified pairing is symmetric.

It remains to determine if it is isogeny-equivariant. If it is, a further constraint exists for the coefficients of $U$ and $V$, namely: 
\begin{align*}
\hat{e}(G_B + \lambda U, H_B + \lambda V) &= \hat{e}(G_B, H_B)^{(1 + \lambda \alpha)(1 + \lambda \delta)}\\
                    &\;\cdot\,\,\hat{e}(H_B, G_B)^{\lambda \beta \lambda \gamma}\\
                    &=          \hat{e}(G_B, H_B)^{(1 + \lambda \alpha)(1 + \lambda \delta) + \lambda^2 \beta \gamma}\\
                    &=     e(G_B, H_B),
\end{align*}
so we also need $(1 + \lambda \alpha)(1 + \lambda \delta) + \lambda^2 \beta \gamma = 1 \pmod {\ell_A^{e_A}}$. \qed 

\subsection{Taking symmetric pairings into account}$ $

Coupling the above constraints $\gamma = -\alpha^2/\beta  \pmod {\ell_A^{e_A}}$ and $\delta = -\alpha \pmod {\ell_A^{e_A}}$ with the additional condition $(1 + \lambda \alpha)(1 + \lambda \delta) + \lambda^2 \beta \gamma = 1 \pmod {\ell_A^{e_A}}$, we have
$(1 + \lambda \alpha)(1 - \lambda \alpha) - \lambda^2 \beta \alpha^2/\beta = 1 - 2\lambda^2 \alpha^2 = 1 \pmod {\ell_A^{e_A}}$ for any $\lambda$, or simply $2 \alpha^2 = 0\pmod {\ell_A^{e_A}}$, which has the solution $\alpha = \alpha_0 \cdot 2^{\lfloor e_A/2 \rfloor}$ for $\ell_A = 2$ and any $0 \leqslant \alpha_0 < 2^{\lceil e_A/2 \rceil}$, or $\alpha = \alpha_0 \cdot \ell_A^{\lceil e_A/2 \rceil}$ for $\ell_A \neq 2$ and any $0 \leqslant \alpha_0 < \ell_A^{\lfloor e_A/2 \rfloor}$. \qed

\section{Validating the process of sharing points (U, V)}\label{uv}

\hspace{.5cm}Now, we are going to verify the sharing of points $U$ and $V$ between Bob and Alice. This is important from the point of view of the correct functionality of the $\binom{2}{1}$-SIOT protocol with regard to the \emph{oblivious} characteristic, $i.e$, in practical terms, points $U$ and $V$  provide the sender to generate two secret keys. Thus, Bob defines points $U$ and $V$ as mentioned in section~\ref{def:premissas}. Recall that these points can be written as a linear combination. After that, he sends to Alice one of the pairs $(G_{B}, H_{B})$ or $(G_{B} - U, H_{B} - V)$. Obviously, Alice doesn't distinguish\footnote{See Subsection~\ref{sec:security}, Lemma~\ref{sec:lemma1}.} which pair of points she received. Thus, upon receipt of $\hat{G}_B$ and $\hat{H}_B$ points from Bob's public key, say $\hat{pk}_{B} = (E_B, \hat{G}_B, \hat{H}_B)$, Alice defines $\hat{U}$ and $\hat{V}$ yielding  $\hat{U} = U$ and $\hat{V} = V$. In other words, Alice and Bob have the assurance that points $U$ and $V$  are being correctly shared between the parties.

\begin{proof} $ $

\begin{description}

\item[1.] In a first assumption, Alice receives points $\hat{G}_{B} = (G_{B} - U)$ and $\hat{H}_{B} = (H_{B} - V)$  from Bob. Evidently, she has not  any knowledge about points  $G_{B}$ and $H_{B}$. Thus, she performs the algebraic development below.

\end{description}

\begin{align*}
\hat{U} &= \alpha \cdot \hat{G}_{B} +  \beta \cdot \hat{H}_{B} ; \\ 
\hat{U} &= \alpha \cdot (G_{B} - U) +  \beta \cdot (H_{B} - V) ; \\
\hat{U} &= \alpha \cdot G_{B} - \alpha \cdot U +  \beta \cdot H_{B} - \beta \cdot V ; \\
\hat{U} &= \underbrace{\alpha \cdot G_{B} +   \beta \cdot H_{B}}_{U} - (\alpha \cdot U +  \beta \cdot V) ;\\
\hat{U} &= U - (\alpha \cdot U +  \beta \cdot V) ; \\
\hat{U} &= U - \underbrace{[\alpha \cdot U +  \beta \cdot  (- \frac{\alpha}{\beta} \cdot U)]}_{0} ; \\ 
\hat{U} &= U. \\
%
%If \; \hat{U} &= U \: \therefore \; \Longrightarrow \:  \hat{V} = V. \; \square \\
\end{align*}

If \; $\hat{U} = U$  and $V = -(\alpha / \beta)U$, then  $\hat{V} = V$. \; \qed 

\begin{description}
            
\item[2.] In this second assumption, Alice receives $\hat{G}_{B} = G_{B}$ e $\hat{H}_{B} = H_{B}$ points from Bob. Similarly, 

\begin{align*}
\hat{U} &= \alpha \cdot \hat{G}_{B} +  \beta \cdot \hat{H}_{B} ; \\
\hat{U} &= \alpha \cdot G_{B} +  \beta \cdot H_{B} ; \\
\hat{U} &= \underbrace{\alpha \cdot G_{B} +   \beta \cdot H_{B}}_{U} ;\\
\hat{U} &= U. \\
\end{align*}

Evidently, in this case, If \; $\hat{U} = U$ then, $\hat{V} = V$ \; \qed.

\end{description}
\end{proof}


\begin{thebibliography}{8}

\bibitem{rabin:1981}
Rabin, Michael O.: How To Exchange Secrets with Oblivious Transfer. \url{http://eprint.iacr.org/2005/187} (2005)

\bibitem{cho:orl}
Chou, Tung., Orlandi, Claudio.: The Simplest Protocol for Oblivious Transfer. Progress in Cryptology -- LATINCRYPT 2015: 4th International Conference on Cryptology and Information Security in Latin America, Guadalajara, Mexico, August 23-26, 2015, Proceedings (2015)

\bibitem{ben:bra:cre}
Bennett, Charles H., Brassard, Gilles., Crépeau, Claude., Skubiszewska, Marie-Hélène.: Practical Quantum Oblivious Transfer. Advances in Cryptology --- CRYPTO '91: Proceedings (1991)

\bibitem{cre:kil}
Crépeau, Claude., Kilian, J.: Achieving oblivious transfer using weakened security assumptions. [Proceedings 1988] 29th Annual Symposium on Foundations of Computer Science (1988)


\bibitem{dam:sal}
Damgård, Ivan., Fehr, Serge., Lunemann, Carolin., Salvail, Louis., Schaffner, Christian.: Improving the Security of Quantum Protocols via Commit-and-Open. Advances in Cryptology - CRYPTO 2009: 29th Annual International Cryptology Conference, Santa Barbara, CA, USA, August 16-20, 2009. Proceedings (2009)


\bibitem{may:sal}
Mayers, D., Salvail. L.: Quantum Oblivious Transfer is secure against all individual measurements. Physics and Computation, 1994. PhysComp '94, Proceedings., Workshop on (1994)


\bibitem{may}
Mayers, D.: Quantum Key Distribution and String Oblivious Transfer in Noisy Channels. Advances in Cryptology --- CRYPTO '96: 16th Annual International Cryptology Conference Santa Barbara, California, USA August 18--22, 1996 Proceedings (1996)


\bibitem{unr}
Unruh, D.: Universally Composable Quantum Multi-party Computation. Advances in Cryptology -- EUROCRYPT 2010: 29th Annual International Conference on the Theory and Applications of Cryptographic Techniques, French Riviera, May 30 -- June 3, 2010. Proceedings (2010)

\bibitem{yao}
Yao, A.: Security of Quantum Protocols Against Coherent Measurements. Proceedings of the Twenty-seventh Annual ACM Symposium on Theory of Computing (1995)

\bibitem{kazmi}
Kazmi, R.: Cryptography from Post-Quantum Assumptions. Cryptology ePrint Archive, \url{http://eprint.iacr.org/2015/376} (2015)

\bibitem{vanessa:2019}
Vitse, Vanessa.: Simple oblivious transfer protocols compatible with kummer and supersingular isogenies. Archives-Ouvertes. hal-01981552 (2019)

\bibitem{yl}
Hazay, C., Lindell, Y.: Efficient Secure Two - Party Protocols - Techniques and Constructions. Springer Berlin Heidelberg (2010)

\bibitem{gold}
Even, S., Goldreich, O., Lempel, A.: A Randomized Protocol for Signing Contracts. Commun. ACM. \doi{10.1145/3812.3818} (1995)

\bibitem{barak:2007}
Barak, B.: Oblivious Transfer and private information retrieval. \url{https://www.cs.princeton.edu/courses/archive/fall07cos433/lec19.pdf}.  Last accessed 20 Oct 2018

\bibitem{feo:jao}
De Feo, L., Jao, D., Pl\^{u}t, J.: Towards quantum-resistant cryptosystems from supersingular elliptic curve isogenies. Journal of Mathematical Cryptology. (2014)

\bibitem{jao:reza}
Azarderakhsh, R., Koziel, B., Jalali, A., Kermani, M M., Jao, D.: NEON-SIDH: Efficient Implementation of Supersingular Isogeny Diffe-Hellman Key Exchange Protocol on ARM. Cryptology ePrint Archive, Report 2016/669 (2016)

\bibitem{hoff:pip:silv}
Hoffstein, J., Pipher, Jill., Silverman, J H.: An introduction to mathematical cryptography. Undergraduate Texts in Mathematics. \doi{10.1007/978-1-4939-1711-2} (2014)

\bibitem{silv}
Silverman, J H.: The arithmetic of elliptic curves. Graduate Texts in Mathematics. \doi{10.1007/978-0-387-09494-6} (2009)

\bibitem{wagner:2016}
Wagner, D.: Technical Perspective: Fairness and the Coin Flip. Communications of the ACM. (2016)

\bibitem{law}
 Washington, L C.: Elliptic curves - Number Theory and Cryptography. Taylor \& Francis Group. LLC. (2008)
 
 \bibitem{kalai:2005}
 Kalai, Y T.: Smooth projective hashing and two-message Oblivious Transfer. EUROCRYPT. (2005)
 
 \bibitem{gal}
Galbraith, S D.: Mathematics of public key cryptography. Cambridge University Press, Cambridge. (2012)

 \bibitem{bunov}
Stolbunov, A.: Cryptography Schemes based on Isogenies. Norwegian University of Science and Technology. (2012)

 \bibitem{chi}
Childs, A M., Jao, D., Soukharev, V.: Constructing elliptic curve isogenies in quantum subexponential time. Journal of Mathematical Cryptology. (2014)

 \bibitem{phi}
Rogaway, P.:On the Role of Definitions in and Beyond Cryptography.\url{http://web.cs.ucdavis.edu/~rogaway/papers/def.pdf}.  Last accessed 10 Oct 2016

 \bibitem{tate}
Tate, J.: Endomorphisms of abelian varieties over finite fields.\doi{10.1007/BF01404549}. (1996)






%========================================================
%\bibitem{ref_article1}
%Author, F.: Article title. Journal \textbf{2}(5), 99--110 (2016)
%
%\bibitem{ref_lncs1}
%Author, F., Author, S.: Title of a proceedings paper. In: Editor,
%F., Editor, S. (eds.) CONFERENCE 2016, LNCS, vol. 9999, pp. 1--13.
%Springer, Heidelberg (2016). \doi{10.10007/1234567890}
%
%\bibitem{ref_book1}
%Author, F., Author, S., Author, T.: Book title. 2nd edn. Publisher,
%Location (1999)
%
%\bibitem{ref_proc1}
%Author, A.-B.: Contribution title. In: 9th International Proceedings
%on Proceedings, pp. 1--2. Publisher, Location (2010)
%
%\bibitem{ref_url1}
%LNCS Homepage, \url{http://www.springer.com/lncs}. Last accessed 4
%Oct 2017
\end{thebibliography}
\end{document}